\documentclass{article}

\usepackage{microtype}
\usepackage{graphicx}
\usepackage{subcaption}
\usepackage{booktabs}
\usepackage{tabularx}
\usepackage{hyperref}

\usepackage[preprint]{icml2026}

\usepackage{amsmath}
\usepackage{amssymb}
\usepackage{mathtools}
\usepackage{amsthm}
\usepackage{bm}
\usepackage{bbm}

\usepackage{algorithm}
\usepackage{algorithmic}

\usepackage{tikz}
\usetikzlibrary{positioning,arrows.meta,shapes}

\usepackage[capitalize,noabbrev]{cleveref}

\crefname{assumption}{Assumption}{Assumptions}
\Crefname{assumption}{Assumption}{Assumptions}
\crefname{example}{Example}{Examples}
\Crefname{example}{Example}{Examples}
\crefname{remark}{Remark}{Remarks}
\Crefname{remark}{Remark}{Remarks}

\theoremstyle{plain}
\newtheorem{theorem}{Theorem}[section]
\newtheorem{proposition}[theorem]{Proposition}
\newtheorem{lemma}[theorem]{Lemma}
\newtheorem{corollary}[theorem]{Corollary}
\theoremstyle{definition}

\newtheorem{assumption}[theorem]{Assumption}
\newtheorem{example}[theorem]{Example}
\theoremstyle{remark}
\newtheorem{remark}[theorem]{Remark}

\newcommand{\E}{\mathbb{E}}
\newcommand{\Pp}{\mathbb{P}}
\newcommand{\R}{\mathbb{R}}
\newcommand{\Var}{\operatorname{Var}}
\newcommand{\Cov}{\operatorname{Cov}}
\newcommand{\indep}{\perp\!\!\!\perp}

\newcommand{\cX}{\mathcal{X}}

\newcommand{\cD}{\mathcal{D}}

\icmltitlerunning{OVB Sensitivity Analysis for Trial Generalization}

\begin{document}

\twocolumn[
\icmltitle{Omitted-Variable Sensitivity Analysis for \\
Generalizing Randomized Trials}

% Author block
\begin{icmlauthorlist}
\icmlauthor{Amir Asiaee}{vumc}
\icmlauthor{Samhita Pal}{vumc}
\icmlauthor{Jared D.\ Huling}{umn}
\end{icmlauthorlist}

\icmlaffiliation{vumc}{Department of Biostatistics, Vanderbilt University Medical Center, Nashville, TN, USA}
\icmlaffiliation{umn}{Division of Biostatistics and Health Data Science, University of Minnesota, Minneapolis, MN, USA}
\icmlcorrespondingauthor{Amir Asiaee}{amir.asiaeetaheri@vumc.org}

\icmlkeywords{Causal Inference, External Validity, Sensitivity Analysis, Transportability, Machine Learning}

\vskip 0.3in
]

\printAffiliationsAndNotice{}

%%%%%%%%%%%%%%%%%%%%%%%%%%%%%%%%
% ABSTRACT
%%%%%%%%%%%%%%%%%%%%%%%%%%%%%%%%
\begin{abstract}
Randomized controlled trials (RCTs) yield internally valid causal effect estimates, but generalizing these results to target populations with different characteristics requires an untestable \emph{selection ignorability} assumption: conditional on observed covariates, trial participation must be independent of potential outcomes.
This assumption fails when unobserved effect modifiers are distributed differently between trial and target populations.

We develop a sensitivity analysis framework for trial generalization grounded in \emph{omitted variable bias} (OVB).
Our key theoretical contribution is an exact decomposition showing that external-validity bias equals \textbf{moderation strength} $\times$ \textbf{moderator imbalance}: (i) how strongly an unobserved variable shifts the treatment effect, times (ii) how differently that variable is distributed across populations after covariate adjustment.
We introduce scale-free sensitivity parameters based on partial $R^2$ values, enabling closed-form bounds and benchmarking against observed covariates---practitioners can assess whether conclusions would change if an unobserved moderator were ``as strong as'' a particular observed variable.
Simulations demonstrate that our bounds achieve nominal coverage and remain conservative under model misspecification, while comparisons with alternative sensitivity frameworks highlight the interpretive advantages of the OVB decomposition.
\end{abstract}

%%%%%%%%%%%%%%%%%%%%%%%%%%%%%%%%
% 1. INTRODUCTION
%%%%%%%%%%%%%%%%%%%%%%%%%%%%%%%%
\section{Introduction}
\label{sec:intro}

Machine learning models trained on data from one distribution often fail when deployed to populations with different characteristics---a phenomenon known as \emph{distribution shift} \citep{quinonero2009dataset,koh2021wilds}.
In causal inference, an analogous challenge arises when we attempt to \emph{generalize} or \emph{transport} treatment effect estimates from a randomized controlled trial (RCT) to a target population that differs from the experimental sample.
Even when an RCT provides internally valid estimates of causal effects, these estimates may not apply to the population where a policy or intervention will actually be deployed \citep{stuart2011use,degtiar2023review}.
This problem---the gap between internal and external validity---is increasingly recognized as a fundamental barrier to evidence-based decision-making \citep{westreich2017transportability,colnet2024causal}.

\paragraph{The generalization problem.}
Consider a pharmaceutical company that conducts a clinical trial at urban academic medical centers, but plans to market the resulting drug nationally.
Or a tech company that A/B tests a new feature on power users who opt into beta programs, but will deploy it to all users.
In both cases, trial participants may systematically differ from the target population in ways that affect how they respond to treatment.
When treatment effects are \emph{heterogeneous}---varying across individuals based on their characteristics---these differences can lead to substantial discrepancies between the trial sample average treatment effect (SATE) and the target average treatment effect (TATE).

\paragraph{Standard solutions and their limitations.}
The methodological literature on \emph{generalizability} and \emph{transportability} provides estimators for the TATE by combining trial outcomes with covariate information from a target sample \citep{cole2010generalizing,stuart2011use,buchanan2018generalizing,dahabreh2019extending,dahabreh2021study}.
These methods adjust for observed differences between trial and target populations using techniques such as:
\begin{itemize}
    \item \textbf{Inverse probability weighting (IPW)}: Reweight trial observations by the inverse odds of trial participation.
    \item \textbf{Outcome modeling (g-formula)}: Fit a model predicting outcomes from treatment and covariates in the trial, then average predictions over the target covariate distribution.
    \item \textbf{Doubly robust methods}: Combine weighting and outcome modeling for robustness to model misspecification.
\end{itemize}

All such estimators require an \emph{untestable} identification assumption: after conditioning on observed covariates $X$, trial participation $S$ must be independent of potential outcomes $Y(a)$.
This \emph{selection ignorability} assumption fails whenever there exist unmeasured \emph{effect modifiers}---variables that both (i) influence how individuals respond to treatment and (ii) are distributed differently between trial and target populations.

\paragraph{Why sensitivity analysis?}
In practice, researchers can never be certain that all relevant effect modifiers have been measured.
Subject-matter knowledge may suggest candidate unmeasured moderators, but their precise distributions and effect-modification strengths remain unknown.
This motivates \emph{sensitivity analysis}: systematic exploration of how conclusions change under hypothetical violations of the identification assumptions.

A well-designed sensitivity analysis should answer questions such as:
\begin{itemize}
    \item How strong would an unmeasured effect modifier need to be to change the sign of the transported effect?
    \item How does the plausible range of the TATE expand as we allow for progressively larger violations?
    \item Are the violations required to overturn our conclusions plausible given domain knowledge?
\end{itemize}

\paragraph{Our contribution.}
We develop a sensitivity analysis for trial generalization based on \emph{omitted variable bias} (OVB).
Our approach yields transparent, interpretable bounds that decompose external-validity bias into two distinct components:
\textbf{bias = moderation strength $\times$ moderator imbalance}.
Specifically, bias arises from (i) \emph{how strongly} an unobserved variable $U$ modifies the treatment effect, and (ii) \emph{how differently} $U$ is distributed between trial and target populations after adjusting for observed covariates.

We make the following contributions:
\begin{enumerate}
    \item \textbf{OVB decomposition for external validity} (\cref{sec:ovb}): Under a linear effect-moderation model, we derive an exact identity expressing the TATE estimation error as a product of moderator strength and moderator imbalance.
    \item \textbf{Partial $R^2$ sensitivity parameterization} (\cref{sec:r2}): We introduce scale-free sensitivity parameters based on partial $R^2$ values, enabling comparisons across outcomes and benchmarking against observed covariates.
    \item \textbf{Robustness summaries and benchmarking} (\cref{sec:robustness}): We define ``robustness values'' quantifying the minimum confounding strength needed to change substantive conclusions, and show how to benchmark these against observed effect modifiers.
    \item \textbf{Practical workflow and experiments} (\cref{sec:estimation,sec:experiments}): We provide a complete estimation workflow and demonstrate the method's performance on synthetic and semi-synthetic data.
\end{enumerate}
All proofs are deferred to \cref{app:proofs}.

%%%%%%%%%%%%%%%%%%%%%%%%%%%%%%%%
% 2. RELATED WORK
%%%%%%%%%%%%%%%%%%%%%%%%%%%%%%%%
\section{Related Work}
\label{sec:related}

\paragraph{Generalizability and transportability.}
The statistical literature on extending causal inferences from trials to target populations has grown substantially.
\citet{cole2010generalizing} introduced inverse probability of sampling weights for generalization, building on the potential outcomes framework.
\citet{stuart2011use} developed propensity-score-based methods and provided practical guidance, while \citet{tipton2013improving} proposed subclassification approaches for educational experiments.
\citet{hartman2015sate} showed how to combine experimental and observational data to estimate population treatment effects, and \citet{kern2016assessing} systematically compared methods including BART and weighting.
\citet{buchanan2018generalizing} extended these methods to complex survey designs, and \citet{dahabreh2019extending,dahabreh2021study} provided a comprehensive framework covering identification, estimation, and study design considerations.
\citet{lesko2017generalizing} offered a clear potential-outcomes perspective on the key assumptions.
From a causal graphical perspective, \citet{pearl2011transportability} and \citet{bareinboim2016causal} formalized when and how causal effects can be transported across environments using selection diagrams.
\citet{rudolph2017robust} developed TMLE-based robust methods for transporting effects across sites.
Recent reviews synthesize this literature: \citet{degtiar2023review} provide a statistical overview, \citet{ling2023overview} focus on practical applications, and \citet{colnet2024causal} discuss connections to machine learning and policy learning \citep{athey2021policy}.

\paragraph{Sensitivity analysis for external validity.}
While sensitivity analysis is well-developed for observational studies \citep{rosenbaum2002observational,robins2000sensitivity,ding2016sensitivity}, methods specifically addressing generalization are more recent.
\citet{nguyen2018sensitivity} consider the case where a moderator is observed in the trial but not in the target dataset, developing bounds based on the moderator's effect-modification strength.
\citet{nie2021covariate} propose optimization-based bounds combining marginal sensitivity models with covariate balancing constraints, allowing for flexible but computationally intensive analysis.
\citet{dahabreh2023sensitivity} parameterize violations via bias functions on the target counterfactual means, providing a general framework but with parameters that lack direct interpretation.
Most closely related to our work, \citet{huang2024sensitivity} develop a two-parameter sensitivity analysis for weighted generalization estimators with benchmarking capabilities; our approach differs by grounding the analysis in an explicit OVB decomposition that separates moderation strength from moderator imbalance.

Our contribution connects the generalization literature to the partial-$R^2$ sensitivity framework of \citet{cinelli2020making}, yielding closed-form bounds with transparent ``strength $\times$ imbalance'' interpretation and enabling direct benchmarking against observed covariates.

\begin{table*}[t]
    \centering
    \small
    \caption{Positioning of our approach relative to representative sensitivity analyses for trial generalization. The goal is not to replace prior frameworks, but to provide an OVB-based lens that yields closed-form, benchmarkable sensitivity summaries.}
    \label{tab:positioning}
    \begin{tabular}{lcccp{5.5cm}}
        \toprule
        Method & Sensitivity parameterization & Closed form & Benchmarking & Primary focus \\
        \midrule
        \citet{nguyen2018sensitivity} & Moderation $\times$ imbalance & Yes & Limited & Missing moderators observed in trial only \\
        \citet{nie2021covariate} & Marginal sensitivity model & No & No & Outcome shift via odds-ratio constraints \\
        \citet{dahabreh2023sensitivity} & Bias functions & Yes & No & Selection bias in transport estimators \\
        \citet{huang2024sensitivity} & Two-parameter model & Yes & Yes & Sensitivity of weighted estimators \\
        \textbf{Ours} & OVB + partial-$R^2$ & Yes & Yes & External validity via OVB decomposition \\
        \bottomrule
    \end{tabular}
\end{table*}

\paragraph{OVB and partial-$R^2$ sensitivity.}
In observational studies, omitted variable bias provides a foundational framework for understanding confounding \citep{cochran1973controlling,rosenbaum1983assessing}.
\citet{cinelli2020making} transformed this framework by introducing partial $R^2$ parameters, which are scale-free and enable transparent benchmarking; their \texttt{sensemakr} package has been widely adopted for sensitivity analysis in regression settings.
\citet{blackwell2014selection} and \citet{oster2019unobservable} provide related approaches in political science and economics, respectively.
\citet{chernozhukov2024ovb} further develop a general OVB theory for a broad class of causal machine learning targets (including covariate-shift policy effects) using Riesz representers and debiased ML inference.
Our trial generalization setting can be seen as an external-validity analogue: latent variables can simultaneously drive trial participation and induce treatment-effect heterogeneity.
\cref{app:riesz} sketches a mapping between our bounds and the general Riesz-representer OVB bounds.

%%%%%%%%%%%%%%%%%%%%%%%%%%%%%%%%
% 3. SETUP AND BASELINE IDENTIFICATION
%%%%%%%%%%%%%%%%%%%%%%%%%%%%%%%%
\section{Setup and Baseline Identification}
\label{sec:setup}

\subsection{Data Structure and Notation}

Let $S \in \{0, 1\}$ indicate population membership, where $S = 1$ denotes the randomized trial and $S = 0$ denotes the target population.
Each unit $i$ has:
\begin{itemize}
    \item Baseline covariates $X_i \in \cX \subseteq \R^p$
    \item Binary treatment assignment $A_i \in \{0, 1\}$
    \item Observed outcome $Y_i \in \R$
    \item Potential outcomes $Y_i(0)$ and $Y_i(1)$ under control and treatment
\end{itemize}

We observe two datasets:
\begin{align}
    \cD^{\text{trial}} &= \{(X_i, A_i, Y_i) : S_i = 1\}_{i=1}^{n_r} \\
    \cD^{\text{target}} &= \{X_j : S_j = 0\}_{j=1}^{n_o}
\end{align}
In the trial, treatment is randomized; in the target, we observe only covariates (no treatments or outcomes).
This ``non-nested'' design is common in practice, though our methods extend to nested designs where trial participants are sampled from the target \citep{dahabreh2021study}.

Figure~\ref{fig:dag} illustrates the causal structure.
The key challenge is that an unobserved variable $U$ may simultaneously (i) modify treatment effects (the $U \to Y$ path that varies with $A$) and (ii) have a different distribution between trial and target populations (the dashed $S \dashleftarrow U$ relationship, indicating that $P(U \mid X, S)$ depends on $S$).

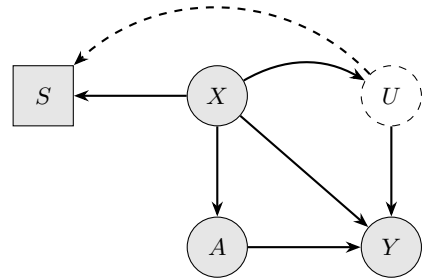
\begin{figure}[t]
\centering
\begin{tikzpicture}[
    node distance=1.2cm and 1.5cm,
    every node/.style={circle, draw, minimum size=0.8cm, font=\small},
    observed/.style={fill=gray!20},
    latent/.style={fill=white, dashed},
    selection/.style={rectangle, fill=gray!20},
    arr/.style={-{Stealth[length=2mm]}, thick}
]
% Nodes
\node[observed] (X) {$X$};
\node[latent, right=of X] (U) {$U$};
\node[observed, below=of X] (A) {$A$};
\node[observed, below=of U] (Y) {$Y$};
\node[selection, left=of X] (S) {$S$};

% Edges
\draw[arr] (X) -- (Y);
\draw[arr] (X) -- (A);
	\draw[arr] (A) -- (Y);
	\draw[arr] (U) -- (Y);
	\draw[arr] (X) -- (S);
	\draw[arr, dashed] (U) to[bend right=45] (S);
	\draw[arr] (X) to[bend left=30] (U);
\end{tikzpicture}
\caption{Causal structure for trial generalization with an unobserved effect modifier $U$.
Observed variables ($X$, $A$, $Y$, $S$) are shaded; latent $U$ is unshaded with a dashed border.
The dashed edge $U \dashrightarrow S$ indicates that $U$'s distribution differs between trial ($S=1$) and target ($S=0$) populations, i.e., $P(U \mid X, S=1) \neq P(U \mid X, S=0)$.
If $U$ also modifies treatment effects ($U \to Y$ interaction with $A$), selection ignorability fails and standard transport estimators are biased.}
\label{fig:dag}
\end{figure}

\subsection{Target Estimand}

Our goal is to estimate the \textbf{target average treatment effect} (TATE):
\begin{equation}
\label{eq:tate}
    \tau^* := \E[Y(1) - Y(0) \mid S = 0]
\end{equation}
This is the average causal effect of treatment in the population where the intervention will be deployed, not the trial population.

For comparison, the \textbf{sample average treatment effect} (SATE) in the trial is:
\begin{equation}
    \tau^{\text{SATE}} := \E[Y(1) - Y(0) \mid S = 1]
\end{equation}
When treatment effects are heterogeneous and the trial and target populations differ in their covariate distributions, $\tau^*$ and $\tau^{\text{SATE}}$ may differ substantially.

\subsection{Trial Internal Validity}

We maintain standard assumptions ensuring the trial provides valid causal estimates for its own population.

\begin{assumption}[Trial internal validity]
\label{assump:internal}
Within the trial ($S = 1$):
\begin{enumerate}
    \item \textbf{Consistency}: $Y = Y(A)$ almost surely.
    \item \textbf{Randomization}: $(Y(0), Y(1)) \indep A \mid X, S = 1$.
    \item \textbf{Positivity}: $0 < \Pp(A = 1 \mid X, S = 1) < 1$ almost surely.
\end{enumerate}
\end{assumption}

Under \cref{assump:internal}, the conditional average treatment effect (CATE) in the trial population is identified:
\begin{equation}
\begin{split}
    \tau^r(x) &:= \E[Y(1) - Y(0) \mid X = x, S = 1] \\
             &= \mu_1^r(x) - \mu_0^r(x)
\end{split}
\end{equation}
where $\mu_a^r(x) := \E[Y \mid A = a, X = x, S = 1]$ is the conditional mean outcome in the trial.

\subsection{Baseline Transport Assumption}

To identify $\tau^*$ from the available data, standard approaches assume:

\begin{assumption}[Selection ignorability]
\label{assump:sel-ign}
For each $a \in \{0, 1\}$:
\begin{equation}
    Y(a) \indep S \mid X
\end{equation}
\end{assumption}

\cref{assump:sel-ign} states that, conditional on observed covariates, potential outcomes are identically distributed across trial and target populations.
Equivalently, $X$ captures all variables that both (i) affect the outcome and (ii) are distributed differently between populations.

\begin{proposition}[Identification under selection ignorability]
\label{prop:identification}
Under \cref{assump:internal,assump:sel-ign} and the additional positivity assumption $\Pp(S = 1 \mid X) > 0$ for all $x$ in the target support, the TATE is identified by:
\begin{equation}
\label{eq:transport}
    \tau^* = \E[\tau^r(X) \mid S = 0] = \E[\mu_1^r(X) - \mu_0^r(X) \mid S = 0]
\end{equation}
\end{proposition}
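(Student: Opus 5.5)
The plan is to condition on $X$ and pass from the target population to the trial population one conditional mean at a time. By the tower property,
\begin{equation*}
\tau^* = \E\bigl[\E[Y(1) - Y(0) \mid X, S = 0] \bigm| S = 0\bigr],
\end{equation*}
so it suffices to show that for $P(X \mid S=0)$-almost every $x$ and each $a \in \{0,1\}$,
\begin{equation*}
\E[Y(a) \mid X = x, S = 0] = \mu_a^r(x).
\end{equation*}
Subtracting the $a=0$ case from the $a=1$ case and averaging over the target covariate distribution then gives \eqref{eq:transport}. The second equality in \eqref{eq:transport} is just the definition $\tau^r = \mu_1^r - \mu_0^r$ established under \cref{assump:internal}.

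For fixed $a$, I would prove the displayed equality in three moves.
\begin{enumerate}
\item \cref{assump:sel-ign} gives $Y(a) \indep S \mid X$, so $\E[Y(a) \mid X = x, S = 0] = \E[Y(a) \mid X = x, S = 1]$.
\item Randomization in \cref{assump:internal} gives $Y(a) \indep A \mid X, S=1$, so this equals $\E[Y(a) \mid X = x, A = a, S = 1]$.
\item Consistency replaces $Y(a)$ by $Y$ on $\{A=a\}$, giving $\mu_a^r(x)$.
\end{enumerate}

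The only real subtlety is that each conditional expectation must be well defined on the relevant set of $x$. This is where the two positivity conditions enter, and I expect it to be the main step to handle carefully. The assumption $\Pp(S=1 \mid X=x) > 0$ on the target support ensures that the trial conditional $\E[\,\cdot \mid X=x, S=1]$ is meaningful for almost every $x$ drawn from $P(X \mid S=0)$. In other words, the set of target covariate values where the trial conditional is undefined is $P(X\mid S=0)$-null. Trial positivity $0<\Pp(A=a \mid X,S=1)<1$ does the same job for conditioning further on $A=a$.

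I would phrase this via versions of conditional expectations. The trial regression functions are defined $P(X\mid S=1)$-a.e. By the first positivity condition, $P(X \mid S=0)$ is absolutely continuous with respect to $P(X\mid S=1)$, so equalities holding $P(X\mid S=1)$-a.e.\ also hold $P(X\mid S=0)$-a.e. That makes the outer expectation over $S=0$ unambiguous.

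Finally, I would note the integrability of $Y(a)$, which is implicitly needed for $\tau^*$ to exist, so that all the conditional expectations above are finite.
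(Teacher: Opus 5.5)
Your proposal is correct and takes essentially the paper's route. The paper's proof is only the one-line remark that the result follows from the tower property and selection ignorability. Your argument fills in what that leaves implicit: the randomization and consistency steps that take $\E[Y(a)\mid X=x,S=1]$ to $\mu_a^r(x)$, and the absolute-continuity argument showing that the two positivity conditions make every conditional mean well defined $P(X\mid S=0)$-almost everywhere.
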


The proof follows directly from the tower property and \cref{assump:sel-ign}.
In practice, we estimate $\mu_a^r(x)$ from trial data and average over the target covariate distribution.

\begin{remark}[Alternative estimators]
Equation~\eqref{eq:transport} suggests an outcome-modeling (g-formula) estimator.
Alternatively, one can use inverse probability weighting with selection odds:
\begin{equation}
    \widehat{\tau}^{\text{IPW}} = \frac{\sum_{i: S_i = 1} w(X_i) A_i Y_i}{\sum_{i: S_i = 1} w(X_i) A_i} - \frac{\sum_{i: S_i = 1} w(X_i) (1-A_i) Y_i}{\sum_{i: S_i = 1} w(X_i) (1-A_i)}
\end{equation}
where $w(x) = \frac{1-\widehat{e}(x)}{\widehat{e}(x)}$ are the inverse odds of trial participation weights and $\widehat{e}(x) = \widehat{\Pp}(S = 1 \mid X = x)$ \citep{buchanan2018generalizing,westreich2017transportability}.
Doubly robust estimators combine outcome modeling and weighting for robustness to misspecification of either model \citep{dahabreh2019extending}.
\end{remark}

%%%%%%%%%%%%%%%%%%%%%%%%%%%%%%%%
% 4. OVB SENSITIVITY MODEL
%%%%%%%%%%%%%%%%%%%%%%%%%%%%%%%%
\section{An OVB Sensitivity Model for Generalization}
\label{sec:ovb}

We now relax \cref{assump:sel-ign} by allowing for an unobserved effect modifier $U$ whose distribution differs between trial and target populations.

\subsection{Latent Moderator Model}

\begin{assumption}[Latent moderator bridge]
\label{assump:bridge}
There exists an unobserved random variable $U$ such that for each $a \in \{0, 1\}$:
\begin{equation}
    Y(a) \indep S \mid (X, U)
\end{equation}
\end{assumption}

\cref{assump:bridge} weakens \cref{assump:sel-ign}: selection ignorability holds conditional on $(X, U)$ rather than $X$ alone.
If $U$ were observed, we could adjust for it and identify $\tau^*$.
The problem is that $U$ is unmeasured, so we cannot directly control for it.

\begin{example}[Unmeasured effect modifiers]
In a clinical trial for a cardiovascular drug:
\begin{itemize}
    \item $U$ = genetic variants affecting drug metabolism
    \item $U$ = patient adherence patterns
    \item $U$ = access to complementary care
\end{itemize}
These may modify treatment effects and be distributed differently across trial sites vs.\ the national population.
\end{example}

\subsection{Linear Effect-Moderation Model}

To derive tractable sensitivity bounds, we impose a linear structure on how $U$ affects potential outcomes.

\begin{assumption}[Linear effect modification]
\label{assump:linear}
For $a \in \{0, 1\}$, the conditional mean potential outcome satisfies:
\begin{equation}
\label{eq:linear-y}
    \E[Y(a) \mid X, U, S] = m_a(X) + \eta_a(X) \cdot U
\end{equation}
where $m_a(X)$ captures the $X$-dependent baseline and $\eta_a(X)$ captures effect modification by $U$.
Without loss of generality, we center $U$ so that $\E[U \mid X] = 0$.
\end{assumption}

The key quantity is the \textbf{moderation strength}:
\begin{equation}
    \beta(X) := \eta_1(X) - \eta_0(X)
\end{equation}
This measures how a one-unit increase in $U$ changes the treatment effect at covariate value $X$.

\begin{remark}[Interpretation]
\cref{assump:linear} is a first-order Taylor approximation to any smooth conditional mean function.
It parallels the linear sensitivity models used in observational studies \citep{cinelli2020making} and captures the key ``strength $\times$ imbalance'' structure.
We discuss nonlinear extensions in \cref{sec:discussion}.
\end{remark}

\subsection{The OVB Decomposition}

Define the \textbf{moderator imbalance} between trial and target:
\begin{equation}
\label{eq:delta-u}
    \Delta_U(X) := \E[U \mid X, S = 0] - \E[U \mid X, S = 1]
\end{equation}
This measures how differently $U$ is distributed across populations, conditional on $X$.
If $\Delta_U(X) = 0$ for all $X$, then \cref{assump:sel-ign} holds despite $U$ being unmeasured.

Define the \textbf{$X$-adjusted transport estimand}:
\begin{equation}
\label{eq:tau-x}
    \tau_X := \E[\tau^r(X) \mid S = 0]
\end{equation}
This is what standard transport estimators target.

\begin{lemma}[External-validity OVB identity]
\label{lem:ovb-identity}
Under \cref{assump:internal,assump:bridge,assump:linear}:
\begin{equation}
\label{eq:tate-ovb}
    \tau^* = \tau_X + \E[\beta(X) \cdot \Delta_U(X) \mid S = 0]
\end{equation}
In particular, if $\beta(X) \equiv \beta$ is constant:
\begin{equation}
\label{eq:tate-ovb-const}
    \tau^* = \tau_X + \beta \cdot \Delta_U^*
\end{equation}
where $\Delta_U^* := \E[\Delta_U(X) \mid S = 0]$.
\end{lemma}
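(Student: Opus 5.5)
The argument conditions on $X$, computes the target and trial conditional effects separately under \cref{assump:linear}, and subtracts.

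\textbf{Target side.} By the tower property,
\[
\tau^* = \E\bigl[\E[Y(1)-Y(0)\mid X,S=0]\mid S=0\bigr].
\]
Conditioning further on $U$ and applying \cref{assump:linear} with $S=0$ gives, for each $a$,
\[
\E[Y(a)\mid X,S=0] = \E\bigl[m_a(X)+\eta_a(X)U \mid X,S=0\bigr] = m_a(X)+\eta_a(X)\,\E[U\mid X,S=0],
\]
since $m_a(X)$ and $\eta_a(X)$ are $X$-measurable and come out of the inner expectation. Differencing over $a$ yields
\[
\E[Y(1)-Y(0)\mid X,S=0] = m_1(X)-m_0(X)+\beta(X)\,\E[U\mid X,S=0].
\]
The model is stated as $\E[Y(a)\mid X,U,S]$, so it applies within each population. \cref{assump:bridge} is what justifies this: it guarantees that the same functions $m_a,\eta_a$ work for both $S=0$ and $S=1$.

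\textbf{Trial side.} By \cref{assump:internal}, $\tau^r(X)=\E[Y(1)-Y(0)\mid X,S=1]$. The same computation with $S=1$ gives
\[
\tau^r(X) = m_1(X)-m_0(X)+\beta(X)\,\E[U\mid X,S=1].
\]
Subtracting the two displays, the $m_a$ terms cancel and we get the pointwise identity
\[
\E[Y(1)-Y(0)\mid X,S=0] = \tau^r(X)+\beta(X)\,\Delta_U(X).
\]

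\textbf{Averaging.} Taking $\E[\cdot\mid S=0]$ of both sides and using the definition \eqref{eq:tau-x} of $\tau_X$ gives \eqref{eq:tate-ovb}. Here we need the positivity $\Pp(S=1\mid X)>0$ on the target support, as in \cref{prop:identification}, so that $\tau^r(X)$ is well defined for target $X$. When $\beta$ is constant, pulling it out of the expectation gives \eqref{eq:tate-ovb-const}.

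\textbf{Main subtlety.} The delicate step is not algebraic. It is checking that the mean model holds for the same $m_a,\eta_a$ under $S=0$ and $S=1$. This is exactly how \cref{assump:bridge} enters: it makes $\E[Y(a)\mid X,U,S]=\E[Y(a)\mid X,U]$, so \cref{assump:linear} indeed describes an $S$-free function. The centering $\E[U\mid X]=0$ plays no role in the identity, and I will note it only as a normalization. I also need integrability of $\eta_a(X)U$ so that the conditional expectations exist; I will state this as implicit.
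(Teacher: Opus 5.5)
Your proposal is correct and follows the same route as the paper's proof: write $\E[Y(a)\mid X,S=s]=m_a(X)+\eta_a(X)\,\E[U\mid X,S=s]$ by iterating expectations over $U$, difference over $a$ to get each population's CATE, subtract to get the pointwise term $\beta(X)\Delta_U(X)$, then average over the target and pull out a constant $\beta$. Your explicit requirement that $\Pp(S=1\mid X)>0$ on the target support, so that $\tau^r(X)$ and hence $\tau_X$ are well defined, is a useful refinement that the paper leaves implicit.
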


\noindent\emph{Proof in \cref{app:proof-ovb-identity}.}

\paragraph{Interpretation.}
\cref{lem:ovb-identity} decomposes external-validity bias into two conceptually distinct components:
\begin{enumerate}
    \item \textbf{Moderation strength} $\beta(X)$: How strongly does $U$ modify treatment effects?
    \item \textbf{Moderator imbalance} $\Delta_U(X)$: How differently is $U$ distributed between trial and target (after $X$-adjustment)?
\end{enumerate}
If either component is zero, selection ignorability holds and $\tau^* = \tau_X$.
Bias requires \emph{both} treatment-effect heterogeneity driven by $U$ \emph{and} differential distribution of $U$ across populations.

\subsection{Simple Sensitivity Interval}

\cref{lem:ovb-identity} immediately yields sensitivity bounds.

\begin{corollary}[Raw sensitivity interval]
\label{cor:simple-interval}
Suppose $|\beta(X)| \leq \Gamma$ and $|\Delta_U(X)| \leq \Lambda$ almost surely.
Then:
\begin{equation}
    \tau^* \in [\tau_X - \Gamma \Lambda, \; \tau_X + \Gamma \Lambda]
\end{equation}
\end{corollary}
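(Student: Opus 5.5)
The plan is to read the interval off \cref{lem:ovb-identity} and bound the bias term pointwise. Under the standing \cref{assump:internal,assump:bridge,assump:linear}, the lemma gives the exact identity
\begin{equation*}
\tau^* - \tau_X = \E[\beta(X)\,\Delta_U(X) \mid S = 0].
\end{equation*}
So it suffices to show that the absolute value of the right-hand side is at most $\Gamma\Lambda$, and then rearrange.

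\textbf{Step 1 (pointwise bound).} By the hypotheses, $|\beta(X)|\le\Gamma$ and $|\Delta_U(X)|\le\Lambda$ almost surely. Multiplying these gives
\begin{equation*}
|\beta(X)\Delta_U(X)| \le \Gamma\Lambda \quad \text{almost surely.}
\end{equation*}

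\textbf{Step 2 (pass to the expectation).} By Jensen's inequality, or simply monotonicity of conditional expectation,
\begin{equation*}
\bigl|\E[\beta(X)\Delta_U(X)\mid S=0]\bigr| \le \E\bigl[|\beta(X)\Delta_U(X)| \,\big|\, S=0\bigr] \le \Gamma\Lambda.
\end{equation*}
Since the product is bounded, the expectation exists, so the lemma's expectation is well defined.

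\textbf{Step 3 (rearrange).} Combining Step 2 with the identity gives $|\tau^*-\tau_X|\le\Gamma\Lambda$, which is the stated interval.

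The only subtle point is the meaning of ``almost surely'': the bounds must hold almost surely with respect to the law of $X$ given $S=0$. That is the measure the expectation in \cref{lem:ovb-identity} is taken over. I would state this explicitly and note that a bound holding almost surely under the marginal law of $X$ implies it (given $\Pp(S=0)>0$), since any event of probability zero under the marginal law has probability zero under the law conditional on $S=0$. Otherwise the argument is immediate, and I expect no real obstacle.
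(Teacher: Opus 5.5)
Your proposal is correct and follows the route the paper intends. The paper gives no separate proof and only notes that \cref{lem:ovb-identity} immediately yields the interval, which is exactly your pointwise bound $|\beta(X)\Delta_U(X)|\le\Gamma\Lambda$ followed by monotonicity of conditional expectation given $S=0$. Your remark that the almost-sure hypotheses are needed under the law of $X$ given $S=0$, and are implied by the marginal version when $\Pp(S=0)>0$, is a useful clarification the paper leaves implicit.
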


The parameters $(\Gamma, \Lambda)$ have direct interpretations:
\begin{itemize}
    \item $\Gamma$: Maximum effect modification---how much can a one-unit increase in $U$ change the treatment effect?
    \item $\Lambda$: Maximum imbalance---how much can the mean of $U$ differ between trial and target at any $X$?
\end{itemize}

While intuitive, these parameters are scale-dependent and hard to benchmark.
We address this next.

%%%%%%%%%%%%%%%%%%%%%%%%%%%%%%%%
% 5. PARTIAL R^2 PARAMETERIZATION
%%%%%%%%%%%%%%%%%%%%%%%%%%%%%%%%
\section{Partial $R^2$ Parameterization}
\label{sec:r2}

We now derive a scale-free reparameterization using partial $R^2$ values, following the approach of \citet{cinelli2020making}.

\subsection{Residualized Variables}

Let $\Pi_X[\cdot]$ denote the $L_2$ projection onto functions of $X$.
Define the residualized variables:
\begin{align}
    \widetilde{U} &:= U - \Pi_X[U] = U - \E[U \mid X] \\
    \widetilde{S} &:= S - \Pi_X[S] = S - \Pp(S = 1 \mid X) \\
    \widetilde{\tau} &:= \tau - \Pi_X[\tau]
\end{align}
where $\tau := Y(1) - Y(0)$ is the (latent) individual treatment effect and $\tau(X, U) := \E[\tau \mid X, U]$ is the full-information CATE.

Define residual variances:
\begin{align}
    \sigma_{\tau \mid X}^2 &:= \Var(\widetilde{\tau}) \\
    \sigma_{S \mid X}^2 &:= \Var(\widetilde{S}) = \Var(S)(1 - R^2_{S \sim X})
\end{align}
where $R^2_{S \sim X}$ is the population $R^2$ of $S$ regressed on $X$.
Since $S$ is binary, $\Var(S) = \pi(1-\pi)$ where $\pi := \Pp(S=1)$ under the reference distribution for $(S,X)$.

\subsection{Partial $R^2$ Sensitivity Parameters}

\begin{assumption}[Linear projection structure]
\label{assump:proj}
After residualizing on $X$, both the treatment effect and selection admit linear projections on $U$:
\begin{align}
    \widetilde{\tau} &= b \cdot \widetilde{U} + \varepsilon_\tau \\
    \widetilde{S} &= g \cdot \widetilde{U} + \varepsilon_S
\end{align}
with $\Cov(\widetilde{U}, \varepsilon_\tau) = \Cov(\widetilde{U}, \varepsilon_S) = 0$.
\end{assumption}

Define the partial $R^2$ parameters:
\begin{align}
\label{eq:partial-r2}
    R^2_{\tau \sim U \mid X} &:= \frac{\Var(b \widetilde{U})}{\Var(\widetilde{\tau})} = \frac{b^2 \Var(\widetilde{U})}{\sigma_{\tau \mid X}^2} \in [0, 1] \\
    R^2_{S \sim U \mid X} &:= \frac{\Var(g \widetilde{U})}{\Var(\widetilde{S})} = \frac{g^2 \Var(\widetilde{U})}{\sigma_{S \mid X}^2} \in [0, 1]
\end{align}

These have clear interpretations:
\begin{itemize}
    \item $R^2_{\tau \sim U \mid X}$: Proportion of residual treatment-effect variance explained by $U$ (after $X$-adjustment).
    \item $R^2_{S \sim U \mid X}$: Proportion of residual selection variance explained by $U$ (after $X$-adjustment).
\end{itemize}

\subsection{Partial $R^2$ Bound}

\begin{assumption}[Constant $X$-adjusted imbalance]
\label{assump:const-imbalance}
The moderator imbalance does not vary with $X$: $\Delta_U(X) \equiv \Delta_U^*$ almost surely.
\end{assumption}

\begin{remark}[When is constant imbalance reasonable?]
\cref{assump:const-imbalance} holds when trial participation shifts the \emph{mean} of $U$ by a constant amount after adjusting for $X$.
If $\Delta_U(X)$ varies with $X$, then the mapping from the bias term $\Delta_U^*=\E[\Delta_U(X)\mid S=0]$ to the selection partial-$R^2$ is no longer determined by $(R^2_{S\sim U\mid X},R^2_{S\sim X})$ alone; in that case, the raw $(\Gamma,\Lambda)$ bound in \cref{cor:simple-interval} remains valid.
\end{remark}

\begin{theorem}[Partial-$R^2$ bound for external-validity bias]
\label{thm:r2-bound}
Under \cref{assump:internal,assump:bridge,assump:linear,assump:proj,assump:const-imbalance} with constant $\beta(X) \equiv \beta$:
\begin{equation}
\label{eq:r2-bound}
    |\tau^* - \tau_X| \leq \sigma_{\tau \mid X} \sqrt{\frac{R^2_{\tau \sim U \mid X} \cdot R^2_{S \sim U \mid X}}{\Var(S)\,(1 - R^2_{S \sim X})}}
\end{equation}
\end{theorem}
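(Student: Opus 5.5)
The plan is to start from the exact identity of \cref{lem:ovb-identity}. With constant $\beta$ it gives $\tau^*-\tau_X=\beta\,\Delta_U^*$. I will then express each factor separately in terms of the projection coefficients $b$ and $g$ from \cref{assump:proj}, and from there in terms of the partial $R^2$ parameters. The bound should then come out as an \emph{equality} $|\tau^*-\tau_X| = \sigma_{\tau\mid X}\sqrt{R^2_{\tau\sim U\mid X}R^2_{S\sim U\mid X}}/\sigma_{S\mid X}$, which in particular gives \eqref{eq:r2-bound}. Throughout, all moments are taken under the pooled reference distribution of $(S,X,U)$. The centering $\E[U\mid X]=0$ gives $\widetilde U=U$ and $\Var(\widetilde U)=\E[\Var(U\mid X)]$.

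\emph{Step 1 (strength factor).} By \cref{assump:linear}, the conditional mean $\E[\tau\mid X,U,S]=m_1(X)-m_0(X)+\beta U$ does not depend on $S$. Hence $\tau(X,U)=m_1(X)-m_0(X)+\beta U$. Since $\E[U\mid X]=0$, we get $\Pi_X[\tau]=m_1(X)-m_0(X)$. We also get $\Cov(\widetilde\tau,\widetilde U)=\Cov(\tau,U)=\beta\Var(U)$: the residual $\tau-\tau(X,U)$ is uncorrelated with $U$, and $\Cov(h(X),U)=0$ for any function $h$. So the projection coefficient in \cref{assump:proj} is $b=\beta$. The definition of $R^2_{\tau\sim U\mid X}$ then yields $|\beta|=\sigma_{\tau\mid X}\sqrt{R^2_{\tau\sim U\mid X}/\Var(\widetilde U)}$.

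\emph{Step 2 (imbalance factor).} This is the step I expect to be the main obstacle, because it is where \cref{assump:const-imbalance} must be used. Write $e(X)=\Pp(S=1\mid X)$ and $\mu_s(X)=\E[U\mid X,S=s]$. The condition $\E[U\mid X]=0$ gives $e\mu_1+(1-e)\mu_0=0$, and the definition of imbalance gives $\mu_0-\mu_1=\Delta_U^*$. Together these imply $\mu_1(X)=-(1-e(X))\Delta_U^*$. Therefore
\begin{equation*}
\Cov(\widetilde S,\widetilde U)=\E[SU]=\E[e(X)\mu_1(X)]=-\Delta_U^*\,\E[e(X)(1-e(X))]=-\Delta_U^*\,\sigma^2_{S\mid X},
\end{equation*}
using $\Var(\widetilde S)=\E[\Var(S\mid X)]=\E[e(1-e)]$. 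It follows that $g=-\Delta_U^*\sigma^2_{S\mid X}/\Var(\widetilde U)$. Combining this with $g^2\Var(\widetilde U)=R^2_{S\sim U\mid X}\sigma^2_{S\mid X}$ gives $|\Delta_U^*|=\sqrt{R^2_{S\sim U\mid X}\Var(\widetilde U)}/\sigma_{S\mid X}$. The point to check carefully is that $\Delta_U^*$ computed here (an $S=0$ average) coincides with the pooled constant. This is exactly what constancy buys: without it, $\E[e(1-e)\Delta_U(X)]$ would not reduce to $\Delta_U^*\sigma^2_{S\mid X}$, as the remark preceding the theorem warns.

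\emph{Step 3 (combine).} Multiply the two factors. The unknown scale $\Var(\widetilde U)$ cancels, which leaves $|\beta\Delta_U^*|=\sigma_{\tau\mid X}\sqrt{R^2_{\tau\sim U\mid X}R^2_{S\sim U\mid X}}/\sigma_{S\mid X}$. Substituting $\sigma^2_{S\mid X}=\Var(S)(1-R^2_{S\sim X})$ from the definitions gives \eqref{eq:r2-bound}. The inequality form simply records that one then bounds the unknown $R^2$ values by user-specified maxima.
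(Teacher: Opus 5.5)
Your proof is correct and follows the paper's route: apply \cref{lem:ovb-identity} with constant $\beta$, identify $b=\beta$, identify $\Delta_U^*$ with minus the projection coefficient of $\widetilde U$ on $\widetilde S$, then convert both factors to partial $R^2$'s and multiply, which yields an equality. The differences are refinements. You let $\Var(\widetilde U)$ cancel rather than normalizing it to one, and you derive $b=\beta$ and $\Delta_U^*=-\Cov(\widetilde S,\widetilde U)/\Var(\widetilde S)$ (via $\mu_1=-(1-e)\Delta_U^*$), both of which the paper only asserts.
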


\noindent\emph{Proof in \cref{app:proof-r2-bound}.}

\paragraph{Interpretation.}
The bound~\eqref{eq:r2-bound} separates three ingredients:
\begin{enumerate}
    \item $\sigma_{\tau \mid X}$: Residual treatment-effect heterogeneity after $X$-adjustment (a scale factor).
    \item $R^2_{\tau \sim U \mid X}$: How much of this heterogeneity could $U$ explain?
    \item $R^2_{S \sim U \mid X} / \{\Var(S)(1 - R^2_{S \sim X})\}$: How strongly could $U$ drive selection?
\end{enumerate}
Large bias requires substantial residual effect heterogeneity \emph{and} substantial residual selection, both attributable to the same $U$.

%%%%%%%%%%%%%%%%%%%%%%%%%%%%%%%%
% 6. ROBUSTNESS VALUES AND BENCHMARKING
%%%%%%%%%%%%%%%%%%%%%%%%%%%%%%%%
\section{Robustness Values and Benchmarking}
\label{sec:robustness}

\subsection{Robustness Values}

The partial-$R^2$ bound yields \emph{robustness values}---minimum confounding strength needed to change conclusions.

\begin{proposition}[Robustness value]
\label{prop:rv}
Let $B > 0$ be a target bias magnitude (e.g., $B = |\widehat{\tau}_X|$ to flip the sign).
Under \cref{thm:r2-bound}, any unobserved moderator must satisfy:
\begin{equation}
\label{eq:rv}
    R^2_{\tau \sim U \mid X} \cdot R^2_{S \sim U \mid X} \geq \Var(S)\,(1 - R^2_{S \sim X}) \left(\frac{B}{\sigma_{\tau \mid X}}\right)^2
\end{equation}
to induce bias at least $B$.
\end{proposition}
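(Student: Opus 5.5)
This is essentially the contrapositive of \cref{thm:r2-bound}: we invert the bias bound~\eqref{eq:r2-bound} in the product of partial $R^2$ values. Fix an unobserved moderator $U$ satisfying the hypotheses of \cref{thm:r2-bound}, namely \cref{assump:internal,assump:bridge,assump:linear,assump:proj,assump:const-imbalance} with constant $\beta$. Suppose it induces bias at least $B$, meaning $|\tau^* - \tau_X| \geq B$. Chaining this with~\eqref{eq:r2-bound} gives
\begin{equation*}
B \leq |\tau^* - \tau_X| \leq \sigma_{\tau \mid X} \sqrt{\frac{R^2_{\tau \sim U \mid X} \, R^2_{S \sim U \mid X}}{\Var(S)\,(1 - R^2_{S \sim X})}}.
\end{equation*}

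The remaining steps are algebraic, and each operation must preserve the inequality. Dividing by $\sigma_{\tau\mid X}$ requires $\sigma_{\tau \mid X} > 0$. I would handle the degenerate case $\sigma_{\tau\mid X}=0$ separately. In that case $\widetilde\tau = 0$, so \cref{assump:proj} forces $b\,\widetilde U = 0$ and the right-hand side of~\eqref{eq:r2-bound} vanishes. Then no bias $B>0$ is achievable, and the statement holds vacuously (reading the right side of~\eqref{eq:rv} as $+\infty$).

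For $\sigma_{\tau\mid X}>0$, both sides are nonnegative because $B>0$, so squaring preserves the inequality. Multiplying through by the positive quantity $\Var(S)(1-R^2_{S\sim X})$ then yields~\eqref{eq:rv}. This quantity is positive provided $0<\Pp(S=1)<1$ and $R^2_{S\sim X}<1$. The latter is implied by the overlap condition $\Pp(S=1\mid X)\in(0,1)$ in \cref{prop:identification}, since otherwise $\sigma_{S\mid X}^2 = 0$ and $R^2_{S\sim U\mid X}$ would not be defined.

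I expect no real obstacle, only bookkeeping on two points. First, I need to make explicit that ``under \cref{thm:r2-bound}'' means the moderator lies in the class to which that theorem applies. Second, I need to note that the population quantities are what enter~\eqref{eq:rv}: plugging in $B=|\widehat\tau_X|$ gives a sign-flip threshold only as an estimated robustness value. Optionally, imposing the symmetric case $R^2_{\tau\sim U\mid X}=R^2_{S\sim U\mid X}=RV$ and taking square roots gives the scalar robustness value $RV = \sqrt{\Var(S)(1-R^2_{S\sim X})}\,B/\sigma_{\tau\mid X}$, truncated at $1$.
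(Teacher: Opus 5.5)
Your proposal is correct and takes the same approach as the paper. The paper gives no separate proof: the proposition is treated as an immediate rearrangement of \cref{thm:r2-bound}, namely chaining $B \le |\tau^*-\tau_X|$ with~\eqref{eq:r2-bound}, squaring, and multiplying through by $\Var(S)(1-R^2_{S\sim X})$; your treatment of the degenerate cases $\sigma_{\tau\mid X}=0$ and $\Var(S)(1-R^2_{S\sim X})=0$ is extra care the paper omits. One small note: the paper defines the robustness value as the product threshold on the right-hand side of~\eqref{eq:rv} itself, so your optional symmetric scalar is its square root (the \texttt{sensemakr}-style convention), not the paper's RV.
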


The \textbf{robustness value} (RV) is the right-hand side of~\eqref{eq:rv}.
A larger RV indicates greater robustness: conclusions can only change if an unobserved moderator simultaneously explains a large share of both residual treatment-effect variation and residual selection.

\subsection{Benchmarking Against Observed Covariates}

A powerful feature of the partial-$R^2$ parameterization is the ability to \emph{benchmark} against observed variables.

\paragraph{Procedure.}
Treat an observed covariate $Z \in X$ as if it were unobserved:
\begin{enumerate}
    \item Compute the partial $R^2$ of selection explained by $Z$ given $X_{-Z}$:
    \begin{equation}
        R^2_{S \sim Z \mid X_{-Z}} = \frac{R^2_{S \sim X} - R^2_{S \sim X_{-Z}}}{1 - R^2_{S \sim X_{-Z}}}
    \end{equation}

    \item Estimate the partial $R^2$ of treatment effect explained by $Z$ given $X_{-Z}$ (using effect modification regressions in the trial).

    \item Compare $R^2_{S \sim Z \mid X_{-Z}} \cdot R^2_{\tau \sim Z \mid X_{-Z}}$ to the robustness value RV.
\end{enumerate}

If the product for observed covariates is smaller than RV, then an unobserved moderator would need to be \emph{stronger} than any observed variable to overturn conclusions.

\begin{example}[Benchmark interpretation]
Suppose:
\begin{itemize}
    \item RV for sign reversal = 0.04
    \item Strongest observed moderator: age, with $R^2_{S \sim \text{age}} \cdot R^2_{\tau \sim \text{age}} = 0.02$
\end{itemize}
To reverse the sign, an unobserved moderator would need to be \emph{twice as strong} as age in its combined selection and effect-modification relationships.
\end{example}

%%%%%%%%%%%%%%%%%%%%%%%%%%%%%%%%
% 7. ESTIMATION AND PRACTICAL WORKFLOW
%%%%%%%%%%%%%%%%%%%%%%%%%%%%%%%%
\section{Estimation and Practical Workflow}
\label{sec:estimation}

\subsection{Complete Workflow}

\begin{algorithm}[tb]
\caption{OVB Sensitivity Analysis for Trial Generalization}
\label{alg:ovb}
\begin{algorithmic}[1]
\REQUIRE Trial data $\{(X_i, A_i, Y_i)\}_{i: S_i=1}$, target covariates $\{X_j\}_{j: S_j=0}$
\REQUIRE Sensitivity parameters: $(\Gamma, \Lambda)$ or $(R^2_{\tau \sim U \mid X}, R^2_{S \sim U \mid X})$
\STATE Fit outcome model on trial: $\widehat{\mu}_a(x)$
\STATE Compute baseline estimate: $\widehat{\tau}_X = \frac{1}{n_o} \sum_{j: S_j=0} [\widehat{\mu}_1(X_j) - \widehat{\mu}_0(X_j)]$
\STATE Construct baseline CI: $\text{CI}_{\text{base}}$ via bootstrap
\STATE Estimate $R^2_{S \sim X}$ from pooled $(S, X)$ data
\STATE Estimate $\widehat{\sigma}_{\tau \mid X}$ (see \cref{sec:sigma-tau})
\STATE Compute bias bound $b$ using \cref{cor:simple-interval} or \cref{eq:r2-bound}
\STATE \textbf{Output:} Sensitivity interval $\text{CI}_{\text{sens}} = [\underline{\tau} - b, \overline{\tau} + b]$
\end{algorithmic}
\end{algorithm}

\subsection{Estimating Component Quantities}

\paragraph{Baseline transported estimate.}
Any standard generalization estimator can be used: outcome modeling, IPW, or doubly robust.
We recommend cross-fitting for valid inference \citep{dahabreh2019extending}.

\paragraph{$R^2_{S \sim X}$.}
Estimable from pooled $(S, X)$ data via linear probability model or logistic regression (using pseudo-$R^2$).

\paragraph{$\sigma_{\tau \mid X}$.}
\label{sec:sigma-tau}
This residual CATE standard deviation is not point-identified without assumptions about individual treatment effects.
Options include:
\begin{enumerate}
    \item \textbf{Treat as sensitivity parameter}: Report curves over a plausible range.
    \item \textbf{Upper bound}: Use $\sqrt{\Var(Y \mid A=1, X) + \Var(Y \mid A=0, X)}$ from trial data.
    \item \textbf{Structural assumption}: Assume constant within-person correlation between potential outcomes.
\end{enumerate}

\subsection{Combining Sampling and Sensitivity Uncertainty}

Let $\text{CI}_{\text{base}} = [\underline{\tau}, \overline{\tau}]$ be a $(1-\alpha)$ confidence interval for $\tau_X$ under selection ignorability.
A conservative sensitivity interval is:
\begin{equation}
    \text{CI}_{\text{sens}} = [\underline{\tau} - b, \overline{\tau} + b]
\end{equation}
This ``inflate then report'' approach is standard in sensitivity analysis.
Sharper inference combining sampling and sensitivity uncertainty is possible but beyond our scope.

%%%%%%%%%%%%%%%%%%%%%%%%%%%%%%%%
% 8. EXPERIMENTS
%%%%%%%%%%%%%%%%%%%%%%%%%%%%%%%%
\section{Experiments}
\label{sec:experiments}

We evaluate the proposed sensitivity bounds in synthetic and semi-synthetic settings; full experimental details and additional results are reported in \cref{app:experiments}.

\paragraph{Controlled shift with known ground truth.}
We simulate a two-population design in which an unobserved effect modifier $U$ is distributed differently between the trial and target, inducing external-validity bias with oracle quantities available for validation (\cref{app:experiments}).

\paragraph{Coverage and calibration.}
Figure~\ref{fig:coverage} shows that the bias envelope is tight: coverage is 0\% when the moderation bound $\Gamma$ understates the true moderation strength and becomes valid once $\Gamma \ge \Gamma^*$.

\begin{figure}[t]
    \centering
    \includegraphics[width=0.95\columnwidth]{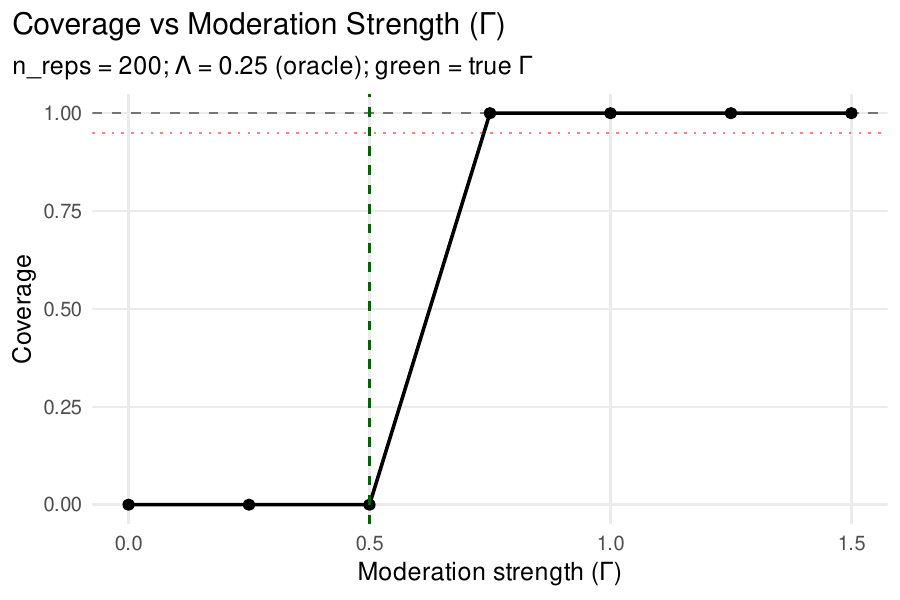}
    \caption{Monte Carlo coverage vs.\ moderation bound $\Gamma$ in a linear-Gaussian DGP. Coverage is 0\% for $\Gamma < \Gamma^*$ and jumps to 100\% at $\Gamma=\Gamma^*$ (green dashed).}
    \label{fig:coverage}
\end{figure}

\paragraph{Full confidence intervals.}
Figure~\ref{fig:fullCI} compares the sensitivity bias envelope alone versus combined with sampling uncertainty.
The ``Full CI'' (adding 95\% bootstrap intervals to the sensitivity bounds) achieves 100\% coverage even at $\Gamma=0$, since sampling uncertainty alone spans the true effect when the point estimate is unbiased on average.
This highlights that our sensitivity bounds quantify \emph{external-validity bias}, complementing rather than replacing standard inferential uncertainty.

\begin{figure}[t]
    \centering
    \includegraphics[width=0.95\columnwidth]{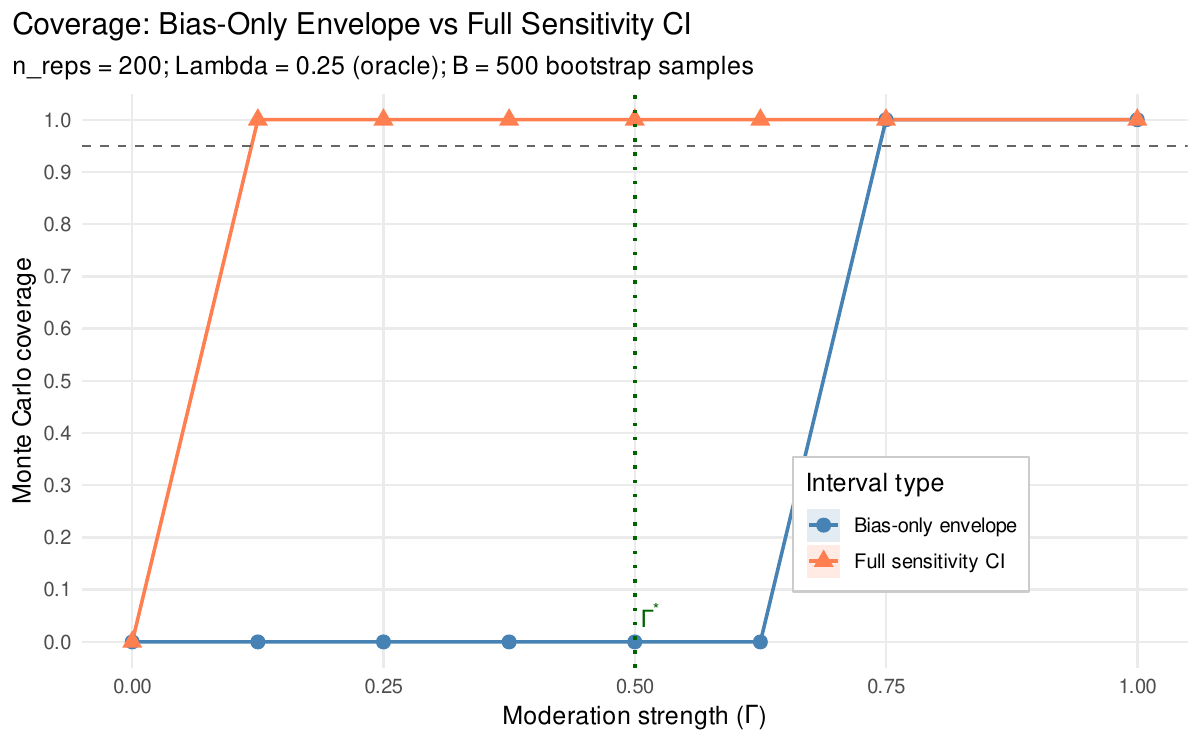}
    \caption{Coverage comparison: bias envelope only (solid) vs.\ full confidence interval combining sensitivity bounds with bootstrap uncertainty (dashed). The full CI achieves valid coverage even at $\Gamma=0$ due to sampling variability.}
    \label{fig:fullCI}
\end{figure}

\paragraph{Benchmarking and robustness.}
In a ``hide one moderator'' benchmark, observed-covariate partial-$R^2$ values provide a concrete scale for sensitivity parameters. Figure~\ref{fig:bench_scatter} compares these benchmarks to the sign-reversal robustness threshold; additional benchmarking plots appear in \cref{app:experiments}.

\begin{figure}[t]
    \centering
    \includegraphics[width=0.95\columnwidth]{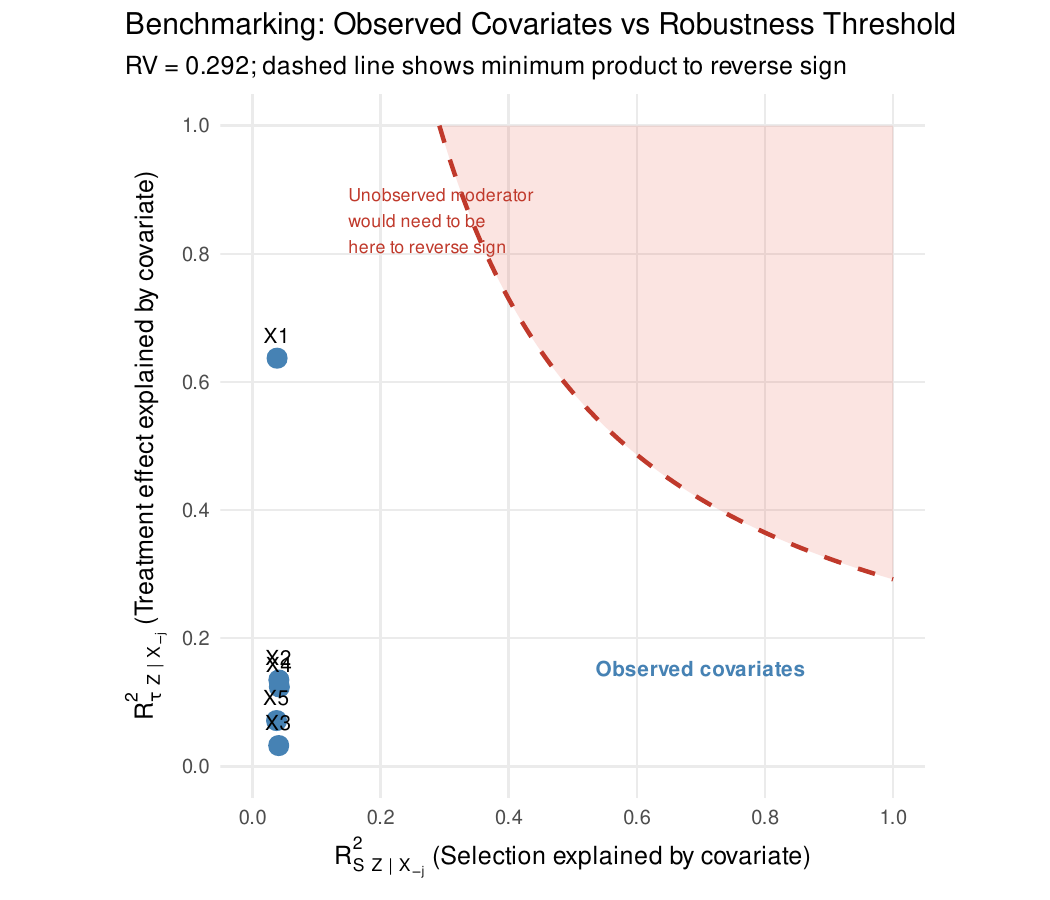}
    \caption{Benchmarking scatter plot. Each observed covariate is plotted by its partial $R^2$ for selection ($x$-axis) and treatment effect ($y$-axis); the dashed line is the sign-reversal robustness threshold.}
    \label{fig:bench_scatter}
\end{figure}

%%%%%%%%%%%%%%%%%%%%%%%%%%%%%%%%
% 9. DISCUSSION
%%%%%%%%%%%%%%%%%%%%%%%%%%%%%%%%
\section{Discussion}
\label{sec:discussion}

We develop a sensitivity analysis framework for trial generalization that decomposes external-validity bias into the \emph{strength} of effect modification by an unobserved variable and its \emph{imbalance} between trial and target populations.
This decomposition leads to closed-form sensitivity bounds and an interpretable calibration via robustness values and benchmarking against observed covariates.
Algorithm~\ref{alg:ovb} summarizes a practical workflow for reporting both a baseline transported estimate and a sensitivity envelope.

\paragraph{Limitations and modeling assumptions.}
Our tightest bounds rely on a linear effect-moderation bridge (\cref{assump:linear}) and, for the partial-$R^2$ mapping, a constant $X$-adjusted imbalance (\cref{assump:const-imbalance}).
The linear model assumption is restrictive but serves as a transparent baseline; when treatment effect heterogeneity is approximately linear in $U$ conditional on $X$, the bounds remain informative.
The constant imbalance assumption simplifies the relationship between the two sensitivity parameters, enabling a single-parameter robustness summary.

In practice, researchers should view our bounds as providing a structured sensitivity analysis rather than exact confidence intervals.
When the assumptions are violated, the bounds may be either conservative or anti-conservative depending on the nature of the misspecification.
Monte Carlo studies (Section~\ref{sec:experiments}) suggest the bounds perform well even under moderate departures from linearity.

\paragraph{Practical guidance.}
We recommend the following workflow for applied researchers:
(i) report the baseline transported estimate $\hat\tau_X$ with confidence intervals accounting for sampling uncertainty;
(ii) compute robustness values $\text{RV}^0$ (sign reversal) and $\text{RV}^q$ (clinical threshold) to summarize how strong an unobserved moderator must be to change conclusions;
(iii) benchmark these values against observed covariates to assess plausibility;
(iv) present sensitivity contour plots showing how the interval expands across a grid of $(R^2_{\tau \sim U \mid X}, R^2_{S \sim U \mid X})$ values.
This approach separates the empirical estimation (step i) from the sensitivity analysis (steps ii--iv), making the role of assumptions transparent.

\paragraph{Connections to general OVB theory.}
A complementary route to sensitivity analysis views trial generalization as a covariate-shift problem and applies the general OVB framework of \citet{chernozhukov2024ovb}, which expresses bias as the covariance between regression and Riesz representer approximation errors.
\cref{app:riesz} sketches this connection; developing sharp RR-based bounds with modern ML nuisance estimators is a natural extension.

\paragraph{Future directions.}
Several extensions merit investigation:
(i) allowing $X$-varying moderation coefficients $\beta(X)$ while maintaining tractable bounds;
(ii) incorporating sampling uncertainty in the sensitivity parameters themselves;
(iii) extending to multi-arm trials and factorial designs;
(iv) developing sensitivity analyses for subgroup-specific treatment effects.
The OVB decomposition provides a foundation for these extensions by clearly separating the sources of external-validity bias.

%%%%%%%%%%%%%%%%%%%%%%%%%%%%%%%%
% ACKNOWLEDGEMENTS (hidden in review)
%%%%%%%%%%%%%%%%%%%%%%%%%%%%%%%%
% \section*{Acknowledgements}
% Omitted for anonymous submission.

%%%%%%%%%%%%%%%%%%%%%%%%%%%%%%%%
% IMPACT STATEMENT
%%%%%%%%%%%%%%%%%%%%%%%%%%%%%%%%
\section*{Impact Statement}

This paper presents work whose goal is to advance methods for assessing the external validity of causal effect estimates from randomized trials.
By providing tools to quantify uncertainty about generalization, we aim to improve evidence-based decision-making in medicine, policy, and technology deployment.
There are many potential societal consequences of our work, none of which we feel must be specifically highlighted here.

%%%%%%%%%%%%%%%%%%%%%%%%%%%%%%%%
% REFERENCES
%%%%%%%%%%%%%%%%%%%%%%%%%%%%%%%%
\bibliography{icml_ovb_generalization}
\bibliographystyle{icml2026}

%%%%%%%%%%%%%%%%%%%%%%%%%%%%%%%%
% APPENDIX
%%%%%%%%%%%%%%%%%%%%%%%%%%%%%%%%
\newpage
\appendix
\onecolumn

\section{Detailed Proofs}
\label{app:proofs}

\subsection{Proof of \cref{lem:ovb-identity} (Detailed)}
\label{app:proof-ovb-identity}

We provide a step-by-step derivation of the OVB identity.

\begin{proof}
\textbf{Step 1: Conditional expectations under the linear model.}

Fix an arbitrary covariate value $x$ and population indicator $s \in \{0, 1\}$.
By \cref{assump:bridge}, $Y(a) \indep S \mid (X, U)$, so the conditional distribution of $Y(a)$ given $(X, U)$ is the same in both populations.

By the law of iterated expectations:
\begin{equation}
    \E[Y(a) \mid X = x, S = s] = \E[\E[Y(a) \mid X = x, U, S = s] \mid X = x, S = s]
\end{equation}

By \cref{assump:linear}:
\begin{equation}
    \E[Y(a) \mid X = x, U, S] = m_a(x) + \eta_a(x) \cdot U
\end{equation}

Substituting:
\begin{align}
    \E[Y(a) \mid X = x, S = s] &= \E[m_a(x) + \eta_a(x) \cdot U \mid X = x, S = s] \\
    &= m_a(x) + \eta_a(x) \cdot \E[U \mid X = x, S = s]
\end{align}

\textbf{Step 2: CATE in each population.}

The conditional average treatment effect at $X = x$ in population $s$ is:
\begin{align}
    \tau_s(x) &:= \E[Y(1) - Y(0) \mid X = x, S = s] \\
    &= \E[Y(1) \mid X = x, S = s] - \E[Y(0) \mid X = x, S = s] \\
    &= [m_1(x) + \eta_1(x) \cdot \E[U \mid X = x, S = s]] \\
    &\quad - [m_0(x) + \eta_0(x) \cdot \E[U \mid X = x, S = s]] \\
    &= [m_1(x) - m_0(x)] + [\eta_1(x) - \eta_0(x)] \cdot \E[U \mid X = x, S = s] \\
    &= \tau_0(x) + \beta(x) \cdot \E[U \mid X = x, S = s]
\end{align}
where $\tau_0(x) := m_1(x) - m_0(x)$ is the baseline CATE (the CATE if $U = 0$) and $\beta(x) := \eta_1(x) - \eta_0(x)$ is the moderation strength.

\textbf{Step 3: Difference between populations.}

The CATE in the target population is:
\begin{equation}
    \tau_{S=0}(x) := \E[Y(1) - Y(0) \mid X = x, S = 0] = \tau_0(x) + \beta(x) \cdot \E[U \mid X = x, S = 0]
\end{equation}

The CATE in the trial population is:
\begin{equation}
    \tau_{S=1}(x) := \E[Y(1) - Y(0) \mid X = x, S = 1] = \tau_0(x) + \beta(x) \cdot \E[U \mid X = x, S = 1]
\end{equation}

The difference is:
\begin{align}
    \tau_{S=0}(x) - \tau_{S=1}(x) &= \beta(x) \cdot [\E[U \mid X = x, S = 0] - \E[U \mid X = x, S = 1]] \\
    &= \beta(x) \cdot \Delta_U(x)
\end{align}
where $\Delta_U(x) := \E[U \mid X = x, S = 0] - \E[U \mid X = x, S = 1]$ is the moderator imbalance at $X = x$.

\textbf{Step 4: Averaging over the target distribution.}

The TATE is:
\begin{equation}
    \tau^* = \E[\tau_{S=0}(X) \mid S = 0] = \E[\tau_0(X) + \beta(X) \cdot \E[U \mid X, S = 0] \mid S = 0]
\end{equation}

The $X$-adjusted transport estimand is:
\begin{equation}
    \tau_X = \E[\tau_{S=1}(X) \mid S = 0] = \E[\tau_0(X) + \beta(X) \cdot \E[U \mid X, S = 1] \mid S = 0]
\end{equation}

Note that $\tau^r(X) = \tau_1(X)$ is the trial CATE at $X$.

Subtracting:
\begin{align}
    \tau^* - \tau_X &= \E[\tau_0(X) - \tau_1(X) \mid S = 0] \\
    &= \E[\beta(X) \cdot \Delta_U(X) \mid S = 0]
\end{align}

Rearranging gives \cref{eq:tate-ovb}:
\begin{equation}
    \tau^* = \tau_X + \E[\beta(X) \cdot \Delta_U(X) \mid S = 0]
\end{equation}

\textbf{Step 5: Constant moderation case.}

If $\beta(X) \equiv \beta$ is constant, factor it out:
\begin{align}
    \tau^* - \tau_X &= \E[\beta \cdot \Delta_U(X) \mid S = 0] \\
    &= \beta \cdot \E[\Delta_U(X) \mid S = 0] \\
    &= \beta \cdot \Delta_U^*
\end{align}
where $\Delta_U^* := \E[\Delta_U(X) \mid S = 0]$ is the average moderator imbalance.
This gives \cref{eq:tate-ovb-const}.
\end{proof}

\subsection{Proof of \cref{thm:r2-bound} (Detailed)}
\label{app:proof-r2-bound}

\begin{proof}
\textbf{Step 1: Normalize $U$.}

Without loss of generality, assume $U$ is scaled so that $\Var(\widetilde{U}) = \Var(U - \E[U \mid X]) = 1$.
This is a normalization that simplifies the algebra.

\textbf{Step 2: Relate $R^2_{\tau \sim U \mid X}$ to $b$.}

Under \cref{assump:proj}, after residualizing on $X$:
\begin{equation}
    \widetilde{\tau} = b \cdot \widetilde{U} + \varepsilon_\tau
\end{equation}
where $\Cov(\widetilde{U}, \varepsilon_\tau) = 0$.

Taking variances:
\begin{equation}
    \Var(\widetilde{\tau}) = b^2 \Var(\widetilde{U}) + \Var(\varepsilon_\tau) = b^2 + \Var(\varepsilon_\tau)
\end{equation}

The partial $R^2$ is:
\begin{equation}
    R^2_{\tau \sim U \mid X} = \frac{\Var(b \widetilde{U})}{\Var(\widetilde{\tau})} = \frac{b^2}{\sigma_{\tau \mid X}^2}
\end{equation}

Solving for $|b|$:
\begin{equation}
    |b| = \sigma_{\tau \mid X} \sqrt{R^2_{\tau \sim U \mid X}}
\end{equation}

\textbf{Step 3: Relate $R^2_{S \sim U \mid X}$ to $g$.}

Similarly, under \cref{assump:proj}:
\begin{equation}
    \widetilde{S} = g \cdot \widetilde{U} + \varepsilon_S
\end{equation}

The residual variance of $S$ is:
\begin{equation}
    \Var(\widetilde{S}) = \Var(S - \E[S \mid X]) = \Var(S)(1 - R^2_{S \sim X})
\end{equation}

The partial $R^2$ is:
\begin{equation}
    R^2_{S \sim U \mid X} = \frac{\Var(g \widetilde{U})}{\Var(\widetilde{S})} = \frac{g^2}{\Var(S)(1 - R^2_{S \sim X})}
\end{equation}

Solving for $|g|$:
\begin{equation}
    |g| = \sqrt{R^2_{S \sim U \mid X}\,\Var(S)(1 - R^2_{S \sim X})}
\end{equation}

\textbf{Step 4: Compute the imbalance coefficient.}

The coefficient $\delta$ in the linear projection of $\widetilde{U}$ on $\widetilde{S}$ is:
\begin{equation}
    \delta = \frac{\Cov(\widetilde{U}, \widetilde{S})}{\Var(\widetilde{S})}
\end{equation}

Under \cref{assump:proj}:
\begin{equation}
    \Cov(\widetilde{U}, \widetilde{S}) = \Cov(\widetilde{U}, g \widetilde{U} + \varepsilon_S) = g \Var(\widetilde{U}) = g
\end{equation}

Therefore:
\begin{equation}
    \delta = \frac{g}{\Var(\widetilde{S})} = \frac{g}{\Var(S)(1 - R^2_{S \sim X})}
\end{equation}

Taking absolute values:
\begin{equation}
    |\delta| = \frac{|g|}{\Var(S)(1 - R^2_{S \sim X})} = \sqrt{\frac{R^2_{S \sim U \mid X}}{\Var(S)(1 - R^2_{S \sim X})}}
\end{equation}

Under \cref{assump:const-imbalance} and the centering $\E[U\mid X]=0$ from \cref{assump:linear}, we have $\Delta_U(X)\equiv\Delta_U^*=-\delta$.

\textbf{Step 5: Combine for the bias bound.}

From \cref{lem:ovb-identity} with constant $\beta$:
\begin{equation}
    |\tau^* - \tau_X| = |\beta| \cdot |\Delta_U^*|
\end{equation}
\noindent
Under \cref{assump:linear} with constant $\beta$, the linear projection coefficient satisfies $b=\beta$, and under \cref{assump:const-imbalance} we have $\Delta_U^*=-\delta$. Therefore:
\begin{align}
    |\tau^* - \tau_X| &= |b| \cdot |\delta| \\
    &= \sigma_{\tau \mid X} \sqrt{R^2_{\tau \sim U \mid X}} \cdot \sqrt{\frac{R^2_{S \sim U \mid X}}{\Var(S)(1 - R^2_{S \sim X})}} \\
    &= \sigma_{\tau \mid X} \sqrt{\frac{R^2_{\tau \sim U \mid X} \cdot R^2_{S \sim U \mid X}}{\Var(S)(1 - R^2_{S \sim X})}}
\end{align}

This is \cref{eq:r2-bound}.
\end{proof}

\section{Connection to General OVB Theory via Riesz Representers}
\label{app:riesz}

Our main text focuses on a \emph{linear} effect-moderation bridge, which yields the transparent identity
\(\tau^* - \tau_X = \E[\beta(X)\Delta_U(X)\mid S=0]\).
A complementary (and more general) route is to view trial generalization as a \emph{covariate-shift} problem and apply the general omitted-variable-bias (OVB) framework of \citet{chernozhukov2024ovb}.
We summarize the key mapping here.

\subsection{TATE as a linear functional}

Fix a treatment arm \(a\in\{0,1\}\) and define the long conditional mean in the trial
\[
m_a^{\ell}(x,u) := \E[Y \mid A=a, X=x, U=u, S=1],
\]
and the short regression that omits \(U\),
\[
m_a^{s}(x) := \E[Y \mid A=a, X=x, S=1] = \E[m_a^{\ell}(X,U)\mid X=x,S=1].
\]
Under \cref{assump:bridge}, the same conditional mean function describes potential outcomes in the target as well.

The target mean potential outcome can be written as the linear functional
\[
\theta_a^* := \E[Y(a)\mid S=0] = \E[m_a^{\ell}(X,U)\mid S=0],
\]
and the TATE is \(\tau^*=\theta_1^*-\theta_0^*\).

\subsection{Riesz representers and the role of trial participation}

Parameters like \(\theta_a^*\) can be written as inner products between a regression function and a \emph{Riesz representer} (RR).
Intuitively, the RR plays the role of a weighting function that ``moves'' expectations from the observed trial distribution to the target distribution.
In our setting, the RR is closely related to inverse odds of trial participation weights.
When we omit \(U\), both the regression \(m_a^{\ell}\) and the RR can change, because \(U\) can explain additional variation in outcomes and in selection into the trial.

\subsection{General OVB bound and partial \(R^2\) reparameterization}

\citet{chernozhukov2024ovb} show that for a broad class of causal targets (including covariate-shift policy effects), the OVB between a ``short'' functional (omitting \(U\)) and a ``long'' functional (including \(U\)) admits the generic form
\[
\text{OVB} = \Cov(\varepsilon_m,\varepsilon_{\alpha}),
\]
where \(\varepsilon_m\) and \(\varepsilon_{\alpha}\) are approximation errors in the regression function and in the RR, respectively.
By Cauchy--Schwarz,
\[
|\text{OVB}| \le \|\varepsilon_m\|_{L_2}\,\|\varepsilon_{\alpha}\|_{L_2},
\]
and these \(L_2\) norms can be reparameterized in terms of \emph{partial \(R^2\)} measures that quantify the incremental explanatory power of \(U\) for (i) the outcome regression and (ii) the RR.
This yields scale-free sensitivity bounds that closely mirror \cref{eq:r2-bound}, while allowing for nonlinear nuisance functions estimated by modern machine learning.

Developing the sharpest RR-based bounds and inference specifically tailored to \(\tau^*\) in non-nested trial generalization designs is a natural direction for our follow-up journal paper.

\section{Additional Experimental Details}
\label{app:experiments}

\subsection{Simulation 1: Controlled Linear-Gaussian DGP}

\paragraph{Data-generating process.}
We design a two-population simulation that directly induces an external-validity violation with known ground truth, satisfying \cref{assump:linear} exactly.
The trial population ($S=1$, denoted $r$) and target population ($S=0$, denoted $o$) have:
\begin{align}
    X^r &\sim \mathcal{N}(0, I_p), \quad X^o \sim \mathcal{N}(\mu_{\text{shift}} \mathbf{1}_p, I_p), \\
    U^s \mid X &\sim \mathcal{N}(\gamma_s X_1, 1), \quad s \in \{r, o\}.
\end{align}
The key feature is that the unobserved moderator $U$ has a different conditional distribution given $X$ in the two populations: in the trial, $\E[U \mid X, S=1] = \gamma_r X_1$, while in the target, $\E[U \mid X, S=0] = \gamma_o X_1$.
This creates the moderator imbalance $\Delta_U(X) = (\gamma_o - \gamma_r)X_1$ that drives external-validity bias.

Treatment is randomized within the trial: $A \sim \text{Bernoulli}(0.5)$ given $S=1$.
Potential outcomes follow a linear model with $U$ as an effect modifier:
\begin{equation}
    Y(a) = \beta_0 + \beta_X^\top X + a \cdot (\tau_0 + \beta_U \cdot U) + \varepsilon, \quad \varepsilon \sim \mathcal{N}(0, \sigma^2).
\end{equation}

\paragraph{Parameter settings.}
We set $n_{\text{trial}} = 2000$, $n_{\text{target}} = 5000$, $p = 5$, $\mu_{\text{shift}} = 0.5$, $\gamma_r = 0$, $\gamma_o = 0.5$, $\beta_U = 0.5$, $\tau_0 = 1$, and $\sigma=1$.
This yields oracle moderation strength $\Gamma^*=\beta_U=0.5$ and oracle average imbalance $\Delta_U^* = (\gamma_o-\gamma_r)\mu_{\text{shift}}=0.25$, with true bias $0.125$.
In this Gaussian DGP, $\Delta_U(X)$ is unbounded, so the almost-sure bound $\Lambda$ in \cref{cor:simple-interval} does not exist; we report results using $\Delta_U^*$ as the imbalance scale.

\paragraph{Sensitivity envelope.}
Figure~\ref{fig:envelope} illustrates the OVB sensitivity envelope for a single realization.

\begin{figure}[t]
    \centering
    \includegraphics[width=0.8\textwidth]{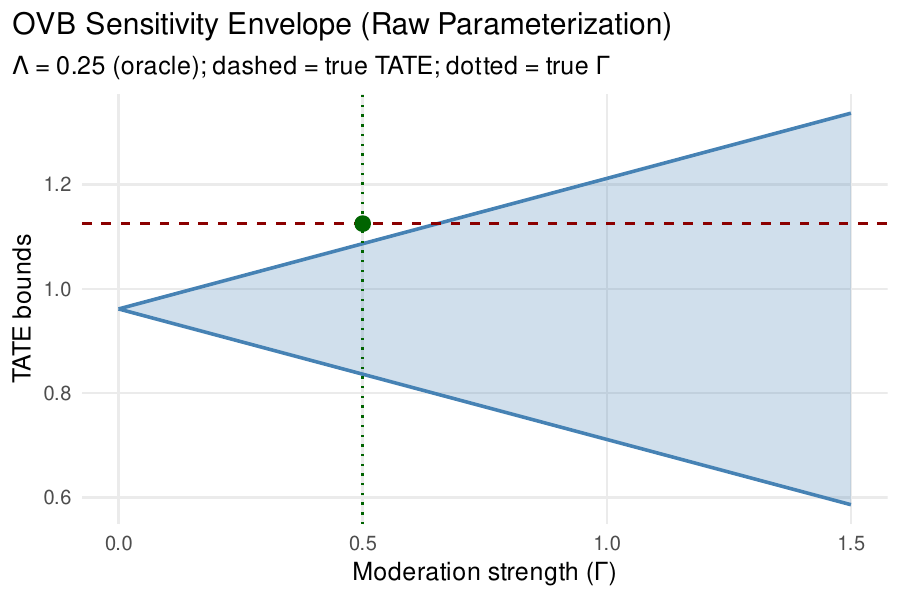}
    \caption{OVB sensitivity envelope for a single simulation. The baseline estimate $\hat{\tau}_X$ is biased for the true TATE $\tau^*$, and the sensitivity interval expands with $\Gamma$.}
    \label{fig:envelope}
\end{figure}

\paragraph{Bias-only envelope vs full sensitivity CI.}
The bias-only envelope accounts for systematic bias from omitted moderators but not sampling uncertainty in $\hat\tau_X$.
Combining the OVB bound with a confidence interval for $\hat\tau_X$ yields a full sensitivity CI; Figure~\ref{fig:fullci} compares the two approaches.

\begin{figure}[t]
    \centering
    \includegraphics[width=0.8\textwidth]{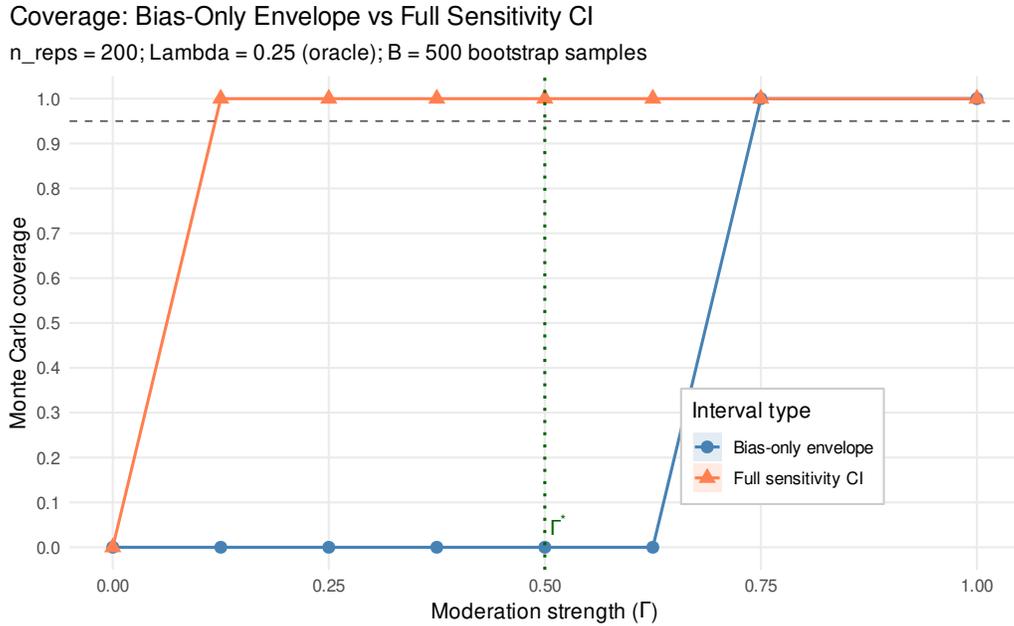}
    \caption{Comparison of bias-only envelope vs full sensitivity CI coverage. The full CI incorporates bootstrap uncertainty in $\hat\tau_X$ in addition to the OVB bias bound.}
    \label{fig:fullci}
\end{figure}

\subsection{Simulation 2: Nonlinear DGP}

We relax the linear assumptions to test robustness using a nonlinear baseline and heterogeneous moderation:
\begin{itemize}
    \item Nonlinear baseline: $m(X) = \beta_0 + \sin(\pi X_1/2) + 0.5 X_2^2$
    \item Heterogeneous moderation: $\beta(X) = \beta_{\text{base}} + \beta_X X_1$
\end{itemize}
This violates the linear effect-modification model in \cref{assump:linear} and tests whether the bounds remain informative under misspecification.

\begin{figure}[t]
    \centering
    \includegraphics[width=0.8\textwidth]{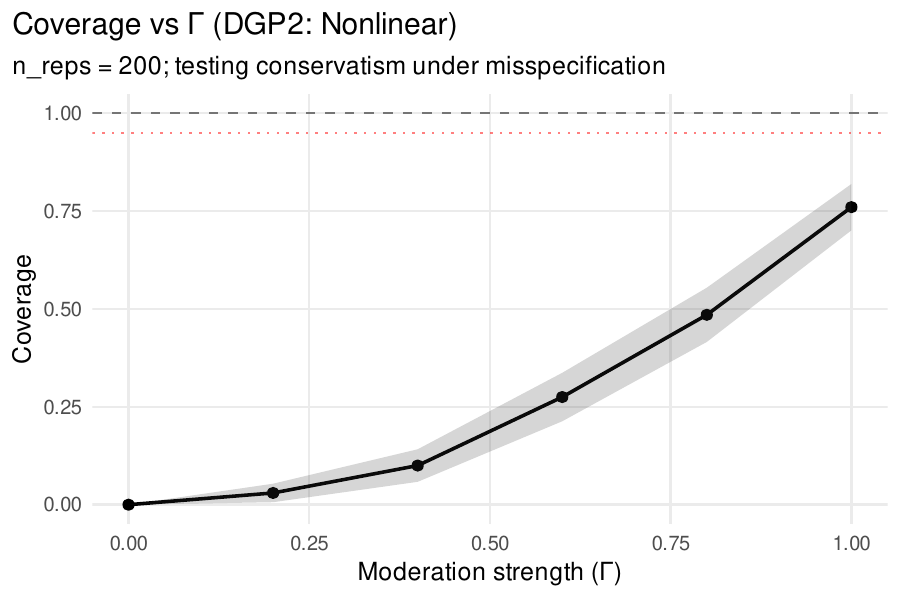}
    \caption{Coverage under a nonlinear DGP with heterogeneous moderation. Coverage increases with $\Gamma$ but does not reach 95\% at $\Gamma=1$, reflecting that a constant bound cannot fully capture $X$-varying moderation strength.}
    \label{fig:dgp2}
\end{figure}

\subsection{Simulation 3: High-Dimensional ML Setting}

We consider a high-dimensional simulation with $p=50$ covariates (only 10 relevant), nonlinear baseline outcomes, and logistic selection into the trial.
We compare linear regression with interactions, LASSO, and random forest as baseline estimators.

\begin{figure}[t]
    \centering
    \includegraphics[width=0.8\textwidth]{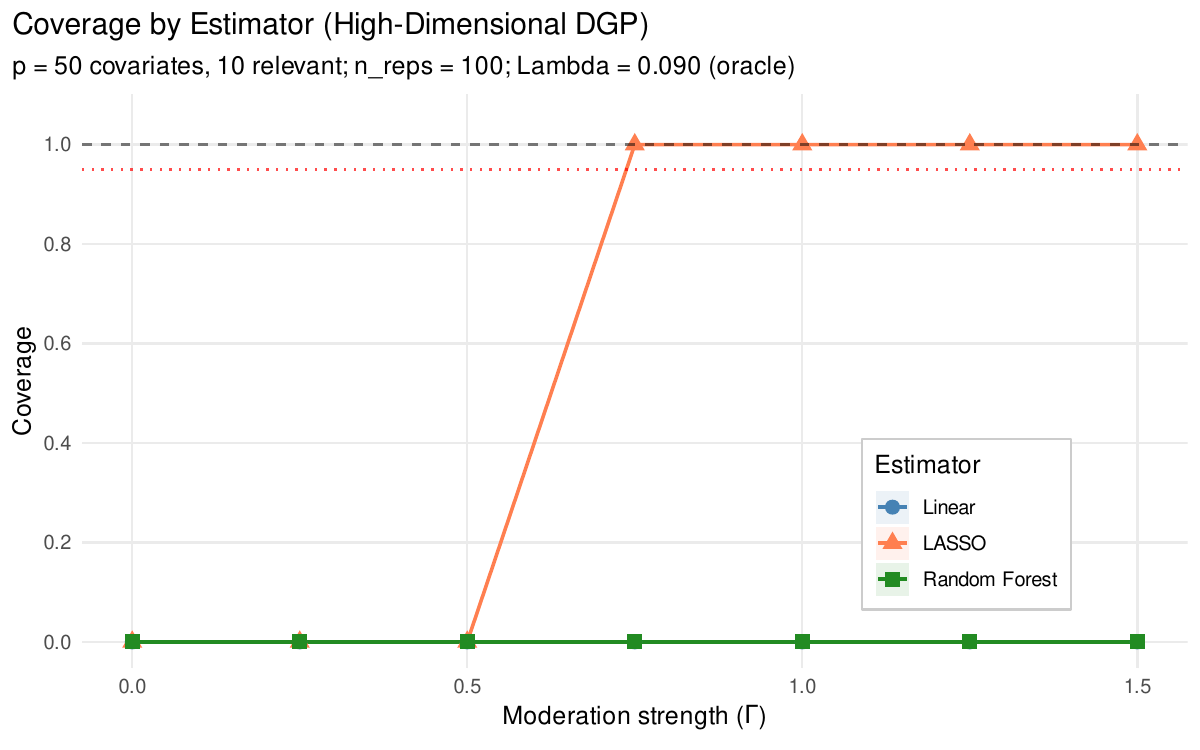}
    \caption{Coverage in a high-dimensional setting. OVB bounds correct for omitted moderators but do not correct baseline estimator misspecification.}
    \label{fig:ml}
\end{figure}

\subsection{Comparison with Marginal Sensitivity Model}

We compare our OVB envelope to a marginal sensitivity model (MSM) that bounds selection odds ratios rather than moderator strength.

\begin{figure}[t]
    \centering
    \includegraphics[width=0.8\textwidth]{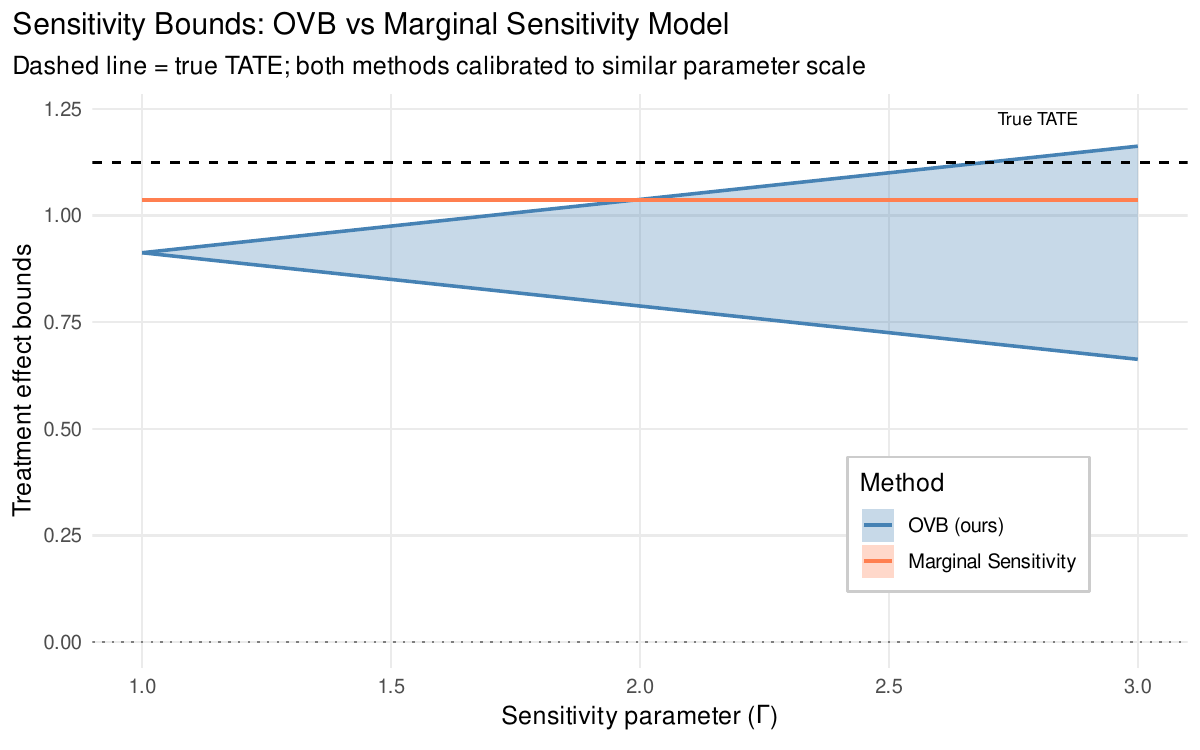}
    \caption{Comparison of OVB sensitivity bounds vs marginal sensitivity model bounds.}
    \label{fig:comparison}
\end{figure}

\subsection{Semi-Synthetic Benchmark and Robustness Visualization}

We demonstrate benchmarking via a ``hide one moderator'' experiment: we generate data with five observed effect modifiers and then treat one covariate as unobserved, using its estimated partial-$R^2$ values as sensitivity parameters.
Figure~\ref{fig:benchmark} shows the resulting benchmark scatter plot.
Figure~\ref{fig:rv_contour} visualizes the sign-reversal region for $(R^2_{\tau \sim U \mid X}, R^2_{S \sim U \mid X})$; Figure~\ref{fig:bench_scatter} in the main text overlays observed covariates against the robustness threshold.

\begin{figure}[t]
    \centering
    \includegraphics[width=0.8\textwidth]{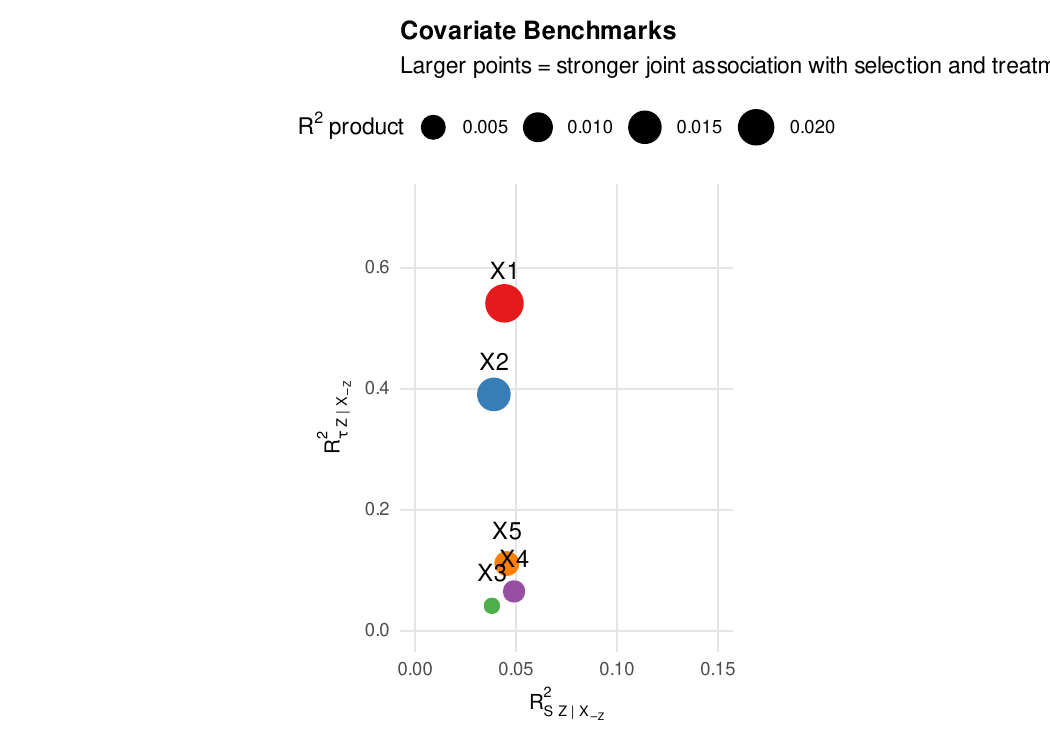}
    \caption{Covariate benchmarks for OVB sensitivity. Each point is an observed covariate, plotted by its partial $R^2$ with selection and with treatment effect.}
    \label{fig:benchmark}
\end{figure}

\begin{figure}[t]
    \centering
    \includegraphics[width=0.8\textwidth]{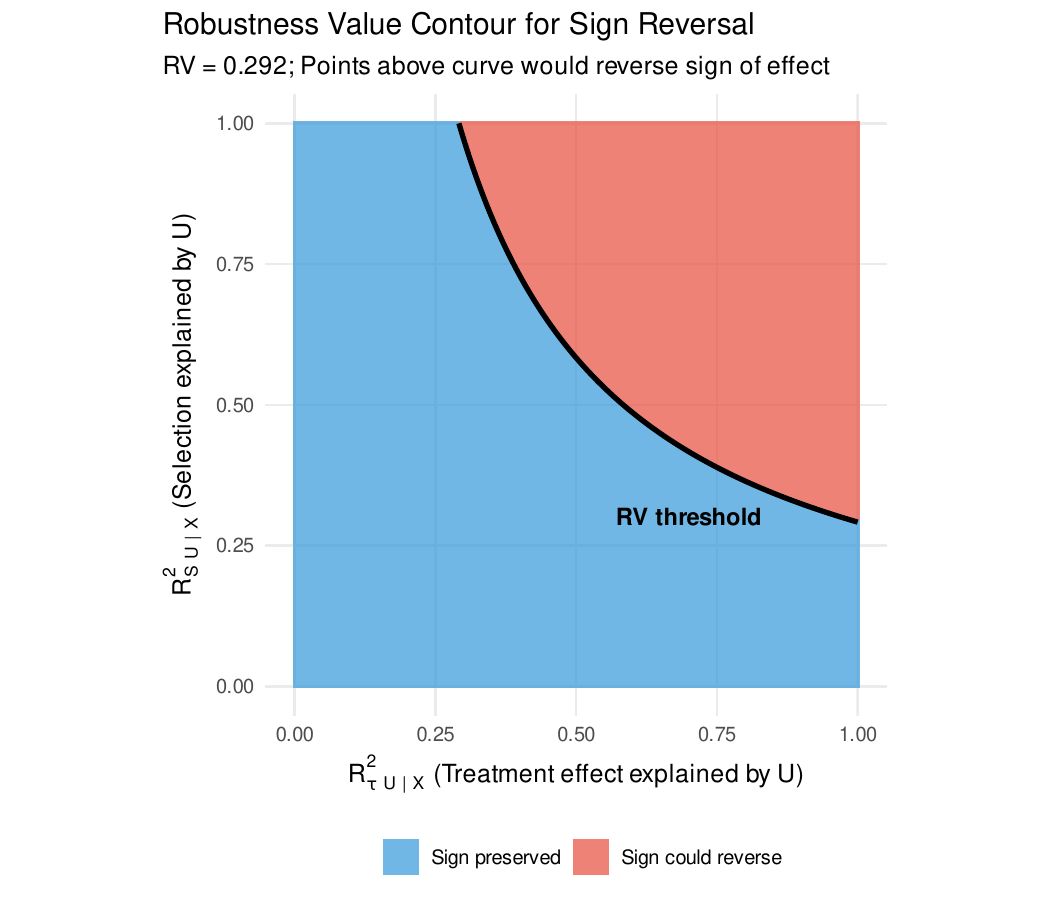}
    \caption{Robustness value contour. The red region shows $(R^2_{\tau \sim U \mid X}, R^2_{S \sim U \mid X})$ combinations that would induce sign reversal.}
    \label{fig:rv_contour}
\end{figure}

\subsection{Computational Details}

All experiments use 500 replications.
Parallel computation uses $\min(n_{\text{cores}} - 1, 8)$ cores.
Bootstrap confidence intervals use 1000 resamples.

\end{document}